\documentclass[aps,prl,twocolumn,nofootinbib,superscriptaddress,10pt]{revtex4-2}

\pdfoutput=1

\usepackage{mathtools,amsmath,amssymb,amsfonts,amsthm, textcomp}
\usepackage{color}
\usepackage{physics}
\usepackage{graphicx}
\usepackage{wrapfig}
\usepackage{flafter}
\usepackage{hyperref}
\usepackage{inputenc}
\usepackage{thmtools}
\usepackage{thm-restate}

\newtheorem{lemma}{Lemma}

\begin{document}

\title{Causal UV completions of relativistic hydrodynamics}

\author{Robbe Brants}
\email{robbe.brants@ugent.be}
\affiliation{Department of Physics and Astronomy, Ghent University, 9000 Ghent, Belgium}

\begin{abstract}
    Relativistic hydrodynamics successfully provides an effective field theory description for the low energy regime of many out-of-equilibrium systems. On the other hand, in this paper we prove that any stand-alone hydrodynamic EFT is inherently acausal and therefore requires the addition of transient UV modes in order to restore causality. This is made possible by the exponential decay of dissipative hydrodynamics in a majority of the lightcone, allowing the possibility of a causal description that still reduces to the hydrodynamic one at late timescales. We then investigate the emergence and possible restrictions of the non-hydrodynamic modes in these causal UV completions.
\end{abstract}

\maketitle

\section{Introduction}
\vspace{-5pt}
Relativistic hydrodynamics plays a crucial role in the phenomenological description of various out-of-equilibrium systems \cite{Florkowski:2017olj, Andersson:2020phh, Hammond:2021vtv}. Its success is regarded as a result of the separation of scales between the modes governing the dynamics of the system. A hydrodynamic mode corresponds to a conserved quantity in the system, and as a result its dispersion relation $\omega(k)$ in complex frequency space vanishes in the limit of infinite wavelength $k \to 0$. In contrast, the remaining degrees of freedom correspond to transient or non-hydrodynamic modes, which decay exponentially in time at all wavelengths. Relativistic hydrodynamics can therefore be seen as an EFT description, applicable in the IR limit corresponding to large wavelengths and late times, when the system is close to equilibrium.

On the other hand, this success may still come as a surprise, since this EFT description is expected to be acausal without corrections in the form of non-hydrodynamic modes, a clear flaw when describing a relativistic fluid \cite{HISCOCK1983466,Hiscock:1985zz,Gavassino:2021owo,Gavassino:2023mad}. This statement will be made explicit in the paper, but already a straightforward example is Fick's diffusion equation, with hydrodynamic mode $\omega(k) = -iDk^2$. After perturbing the system at a single point in space, the retarded correlation function governed by this mode spreads out instantaneously throughout the entirety of real space for all $t>0$:
\begin{equation} \label{eq:Gdiff}
    G(t,x) = \frac{1}{2\pi}\int e^{ikx-i\omega(k)t}dk = \frac{1}{\sqrt{4\pi D t}}\exp\left(-\dfrac{x^2}{4Dt}\right)\,.
\end{equation}
A possible solution to this problem is moving away from the IR limit, and reintroducing the UV modes that were integrated out. An example is given by the Telegrapher's equation, where a transient mode with relaxation time $\tau_\pi$ is added. This type of causal completion is used in many hydrodynamic descriptions such as MIS theory \cite{muller1967,ISRAEL1976310,Israel:1979wp} and the BDNK framework \cite{Bemfica:2017wps,Kovtun:2019hdm,Bemfica:2019knx}. However, this approach also modifies the dispersion relation of the hydrodynamic mode in the IR regime, in the form of higher gradient corrections. \\
\\
We instead want to focus on \emph{causal UV completions}, which are causal correlation functions containing additional transient UV modes, that have any particular hydrodynamic dispersion relation as an \emph{exact} mode below some energy cutoff, where it furthermore remains the slowest decaying mode.

This approach can be seen as a bootstrap method for constructing the UV completion of an EFT. Here, the IR description is fixed by the hydrodynamic dispersion relation, while we remain fully agnostic about the microscopics of the system. We then use causality as an overarching principle to put restrictions on the regime of applicability of hydrodynamics, together with the compatible UV completions. \\
\\
This goal is achieved by proving three mathematical theorems, together with a proper study of causal mode structures. First, we investigate the requirements for correlation functions and their modes to be causal, and proof stand-alone dissipative hydrodynamic theories are guaranteed to violate causality, justifying the need for a causal completion.
We then proceed to look at the effects of dissipative hydrodynamics on real-space evolution, where we find a decay rate depending on the radial direction (velocity) within the lightcone. This is used to guarantee the existence of causal UV completions for hydrodynamic theories.
Finally, we look at the emergence of non-hydrodynamic modes in these UV completions, in particular for the example of pure diffusion. \\
\\
In the paper, correlation functions and Fourier transforms will be considered as if one-dimensional. Generalizing to higher dimensions is reasonably straightforward, as it adds polynomial factors while we will focus entirely on exponential decay rates.

\vspace{-5pt}
\section{Causal correlation functions}
\vspace{-5pt}
The main object of interest in this paper will be the retarded correlation function $G(t,x)$, which describes the reaction of a fluid to an initial perturbation at $x = 0$, together with its singularities in complex frequency space, known as the modes of the system. If $G(t,x)$ is a causal correlation function, it must vanish outside of the forward light cone, i.e. when $|x|>t$. In QFT, $G(t,x)$ is generally assumed to be a tempered distribution. In simple terms, this means its growth is bounded by some polynomial at large $x$ and $t$, and so this condition translates to stability of the theory. It furthermore guarantees the existence of a Fourier transform
\begin{equation}
    G(\omega,k) = \int G(t,x) e^{i\omega t-ikx} dtdx \,,
\end{equation}
and because of the causality condition, we find that the integrand is exponentially suppressed when $\Im \omega > |\Im k|$. As a result, the integral and all its derivatives will converge absolutely in this region of Fourier space. This leads to the well studied causality condition that all singularities of the retarded correlation function must obey \cite{Heller:2022ejw,Heller:2023jtd,Gavassino:2023myj,Hui:2025aja,Roy:2026tbr}:
\begin{equation} \label{eq:causal cond}
    \Im\omega(k) \le |\Im k| \,.
\end{equation}

This condition is sufficient to bring us to the first important point of this paper, which is that dissipative hydrodynamics as a stand-alone EFT can \emph{not} be causal, meaning that if only the hydrodynamic mode contributes to the correlation function, it will have support outside of the lightcone. In this statement, a dissipative hydrodynamic dispersion relation corresponds to a function $\omega(k):\mathbb C \to \mathbb C$, for which $\omega(0) = 0$ and $\Im \omega(k) < 0$ for all $k \in\mathbb R\backslash \{0\}$. Explicitly, we have the following result:
\begin{restatable}[Acausality of hydrodynamics]{thm}{acausal hydro} \label{th:acausal hydro}
    Let $\omega(k)$ be a dissipative hydrodynamic dispersion relation. Then, the correlation function
    \begin{equation}
        G(t,x) = \frac{1}{2\pi}\theta(t)\int e^{i kx-i\omega(k)t}dk
    \end{equation}
    has support outside of the lightcone $|x| \le t$.
\end{restatable}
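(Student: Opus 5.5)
Argue by contradiction: assume $G(t,x)$ vanishes for $|x|>t$, and derive a contradiction with the causality condition \eqref{eq:causal cond}. The only singularity structure of $G(\omega,k)$ comes from the single mode $\omega(k)$, so we want to show that a dissipative dispersion relation must violate $\Im\omega(k)\le|\Im k|$ somewhere in the complex $k$-plane. Concretely, for fixed $t>0$ the spatial Fourier transform $\hat G(t,k)=e^{-i\omega(k)t}$ is the transform of a distribution supported in $[-t,t]$. By the Paley--Wiener--Schwartz theorem it must then extend to an entire function of exponential type at most $t$:
\begin{equation}
    |e^{-i\omega(k)t}| = e^{t\,\Im\omega(k)} \le C_t(1+|k|)^{N} e^{t|\Im k|}\,.
\end{equation}
Hence $\omega(k)$ is entire, and since this holds for all $t>0$ (take $t\to\infty$ after dividing by $t$), $\Im\omega(k)\le|\Im k|$ on all of $\mathbb C$. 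This is exactly \eqref{eq:causal cond} applied to the single mode.

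The second step is to show that no entire $\omega$ with $\omega(0)=0$ and $\Im\omega<0$ on $\mathbb R\setminus\{0\}$ can satisfy $\Im\omega(k)\le|\Im k|$ everywhere. In the upper half plane, $h(k)=\Im\omega(k)-\Im k$ is harmonic, satisfies $h\le 0$, and has boundary values $\Im\omega(k)\le 0$ on the real axis. A nonpositive harmonic function on the half plane has the Poisson/Herglotz representation
\begin{equation}
    h(k)=-c\,\Im k+\text{Poisson integral of } \Im\omega|_{\mathbb R}, \qquad c\ge0 .
\end{equation}
Thus $\Im\omega(k)=(1-c)\Im k+P[\Im\omega](k)$, and similarly in the lower half plane. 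So $-i\omega(k)$ is, up to a linear term, a Nevanlinna-type function with a nonpositive boundary measure. Entirety then forces the boundary density to be real-analytic. Matching the two half-plane representations across $\mathbb R$, I expect to obtain that $\Im\omega$ on $\mathbb R$ is the restriction of a harmonic function of at most linear growth. Such a function is affine, so $\omega(k)=ak+b$ with $a$ real, $|a|\le1$, and $b=\omega(0)=0$. That gives $\Im\omega\equiv0$ on $\mathbb R$, contradicting dissipativity.

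A cleaner route to the same conclusion is to note that $|\Im\omega(k)|$ is bounded by a linear function on both sides. A Borel--Carathéodory / Liouville-type argument then makes $\omega$ a polynomial of degree at most $1$, and dissipativity at any single real $k\neq0$ immediately contradicts $\omega(0)=0$.

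The main obstacle is this step: one only has the \emph{one-sided} bound $\Im\omega\le|\Im k|$, not a two-sided one, so Borel--Carathéodory does not apply directly. I would first try applying it to $-i\omega$, whose real part $\Im\omega$ is bounded above by $|k|$ on circles. Borel--Carathéodory gives $|\omega(k)|\le C(r+|\omega(0)|)$ on $|k|\le r/2$, hence $\omega$ is linear, and linearity together with $\omega(0)=0$ and $\Im\omega<0$ for real $k\neq0$ fails, since $\Im(ak)$ changes sign between $k$ and $-k$. The delicate points are to justify the Paley--Wiener step rigorously for distributions, and to ensure the polynomial prefactor does not spoil the limit $t\to\infty$.
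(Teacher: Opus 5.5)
Your route differs from the paper's, and it can be made to work, but as written it has one genuine gap. The paper takes the full transform $G(\omega,k)=i/(\omega-\omega(k))$ and imposes \eqref{eq:causal cond} on all its singularities. It then splits into two cases. Entire $\omega$ is excluded by citing Theorem 1 of Heller et al. Non-entire $\omega$ is excluded because poles are ruled out by their Theorem 2, and any other singularity sits at a $k_*$ independent of $\omega$. You instead apply Paley--Wiener--Schwartz to each time slice. If that yields an entire $\omega$, the non-entire case disappears and Borel--Carath\'eodory replaces the cited theorem.

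The gap is the sentence ``Hence $\omega(k)$ is entire.'' Paley--Wiener--Schwartz gives an entire $E_t$ with $E_t=e^{-i\omega t}$ on $\mathbb R$. Taking a logarithm needs $E_t$ zero-free, and at a single $t$ this fails in general. For $a>0$,
\[
F(k)=\frac{a}{\sinh a}\,\frac{\sin(k-ia)}{k-ia}
\]
is the Fourier transform of $\frac{a}{2\sinh a}e^{-ax}\mathbf 1_{[-1,1]}(x)$. It has $F(0)=1$, no real zeros, and $|F(k)|^2=\frac{a^2(\sin^2k+\sinh^2a)}{\sinh^2a\,(k^2+a^2)}<1$ for real $k\neq0$. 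So $\omega:=i\log F$ (continuous branch on $\mathbb R$, $\omega(0)=0$) is dissipative and its $t=1$ slice is exactly supported in $[-1,1]$. Yet $\omega$ has logarithmic branch points at $k=n\pi+ia$, $n\in\mathbb Z\setminus\{0\}$.

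The missing idea is to use all $t$ at this step, not only for $t\to\infty$. Since $E_sE_t$ and $E_{s+t}$ agree on $\mathbb R$, the identity theorem gives $E_sE_t=E_{s+t}$, so $E_1=(E_{1/n})^n$ for every $n$. A zero of $E_1$ would then have multiplicity divisible by every $n$, which is impossible since $E_1\not\equiv0$. Hence $E_1=e^{h}$ with $h$ entire. Then $ih$, shifted by a suitable multiple of $2\pi$, is an entire function equal to $\omega$ on $\mathbb R$; this uses continuity of $\omega$ on $\mathbb R$, which you also need to pass from the 2D support assumption to the slices.

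Once this is repaired, the rest is easier than you expect. Borel--Carath\'eodory needs only an upper bound on the real part of $-i\omega$, so the one-sided bound is not an obstacle. You also never need $t\to\infty$ or the exact inequality $\Im\omega\le|\Im k|$. The $t=1$ estimate $\Im\omega(k)\le|k|+N\log(1+|k|)+\log C$ already gives $|\omega(k)|=O(|k|)$. Hence $\omega(k)=ak$, and $\Im\omega(\pm k)=\pm k\,\Im a$ contradicts dissipativity. The Herglotz route is unfinished and can be dropped.

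Compared with the paper, the repaired argument is self-contained and uses $\omega$ only on $\mathbb R$, which is all that enters $G$. It avoids continuing $G(\omega,k)$ into complex $k$ and the inversion $\omega(k)=\omega-i/G(\omega,k)$. The paper's version instead stays within the frequency-space criterion \eqref{eq:causal cond}, which serves its conceptual point that all singularities of $G(\omega,k)$, not just the modes, are constrained.
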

\begin{proof}
    We first observe that, since $\Im\omega(k) \le 0$ for all $k\in \mathbb R$, $\theta(t)e^{-i\omega(k)t}$ is bounded for all $t$. It is therefore a tempered distribution, and so is the correlation function $G(t,x)$ at a fixed $t$. For any Schwartz function $\varphi\in\mathcal S(\mathbb R^2)$, we can make use of the boundedness of the integrand to find
    \begin{equation}
        \langle G,\varphi\rangle = \frac{1}{2\pi} \int_0^\infty \int_{-\infty}^\infty \overline{\hat\varphi(t,k)} e^{-i\omega(k)t} dt dk \,,
    \end{equation}
    and therefore
    \begin{equation}
        |\langle G, \varphi \rangle| \le \frac{1}{2\pi} \int_0^\infty \int_{-\infty}^\infty |\hat\varphi(t,k)| dt dk < \infty \,,
    \end{equation}
    meaning $G(t,x)$ is also tempered in $\mathbb R^2$.
    
    We can thus directly calculate its Fourier transform
    \begin{equation}
        G(\omega,k)=\frac{i}{\omega-\omega(k)} \,.
    \end{equation}
    We will now assume the correlation function to vanish outside of the lightcone, and look separately at the cases where $\omega(k)$ is either entire, or when it contains singularities. If $\omega(k)$ is an entire function, we can use Theorem 1 from \cite{Heller:2022ejw} which states that, in order to satisfy the causality condition \eqref{eq:causal cond}, $\omega(k)$ must be a polynomial of at most degree one. Since $\omega(k)$ is dissipative, this is not a possibility. The dispersion relation can therefore not be analytic in the entire complex $k$-space. We invert the relation
    \begin{equation}
        \omega(k) = \omega - \frac{i}{G(\omega,k)} \quad \forall(\omega,k)\in\mathbb C^2\,,
    \end{equation}
    and see that $\omega(k)$ will be non-analytic at $k=k_*$ if either $G(\omega,k)$ is also non-analytic at $k_*$, or $G(\omega,k_*)=0$. In the second case, $\omega(k)$ has a pole, which is ruled out from Theorem 2 in \cite{Heller:2022ejw} under the causality condition. In the first case, $G(\omega,k)$ has a singular point in complex $k$-space, whose location is crucially independent of $\omega$. Now remember that the causality bound counts for any pair of values $(\omega,k)$ where $G(\omega,k)$ is singular. In this scenario, we can always choose a value of $\omega$ with $\Im \omega > |\Im k_*|$, such that $G(\omega,k_*)$ would need to be analytic if $G(t,x)$ is causal. With this final contradiction, we conclude that $G(t,x)$ must have support outside of the lightcone.
\end{proof}

This theorem lays the foundation for the rest of the paper, since it enforces the need of a UV completion for any hydrodynamic theory other than a perfect fluid. 

The arguments in this proof are not novel \cite{Heller:2022ejw, Gavassino:2023mad}, but it is important to stress they follow entirely from the causal restriction \eqref{eq:causal cond} on \emph{all} singularities of the correlation function, not just the modes in frequency space. 

Even if a hydrodynamic dispersion relation satisfies the causality condition through singularities in $k$-space, it still cannot exist in a causal correlation function without the addition of other modes that remove these singularities. A perfect example is the hydrodynamic mode in Israel-Stewart theory:
\begin{equation}
    \omega(k) = -\frac{i}{2\tau_\pi} \left(1-\sqrt{1- 4D\tau_\pi k^2}\right) \,.
\end{equation}
It satisfies the causality condition, but due to the branch cut singularity at $k=\pm1/\sqrt{4D\tau_\pi}$, the single mode correlation function still has support outside the lightcone. This support is exactly canceled when adding the transient mode to the correlation function.

In general, to ensure no additional singularities occur other than the modes in frequency space, it is sufficient to have local equations of motion \cite{Gavassino:2023mad}, since this turns the denominator of the correlation function into a polynomial in $\omega$ and $k$.

\vspace{-5pt}
\section{Hydrodynamic decay}
\vspace{-5pt}
As the imaginary part of $\omega(k)$ vanishes for large wavelengths, one may expect the correlation function to decay subexponentially. But for a diffusive mode for example, this is only true when looking at the response function for a fixed point $x$ in space. When tracking the response along a radial line $x = vt$ with velocity $v$, we still observe an exponential decay for all $v$ other than the sound speed defined as $c_s := \omega'(0)$. For the diffusive mode, where $c_s=0$, we find from \eqref{eq:Gdiff} that
\begin{equation}
    G(t,v)\equiv G(t,x=vt) \propto \exp\left(- \frac{v^2}{4 D} t\right) = e^{-\Gamma(v)t}
\end{equation}
decays at a rate $\Gamma(v) >0$ for all $v\ne 0$. For more complicated dispersion relations, it is typically not possible to find a closed form expression for the correlation function, but if $\omega(k)$ is analytic in a region $\mathcal R \subset \mathbb C$ around $k=0$, we can still compute the late time decay rate using a saddle-point (also known as steepest descent) method \cite{fedoryuk1989asymptotic}. 

This method works as follows. Let $S(k,v) := ikv-i\omega(k)$ and consider the hydrodynamic correlation function
\begin{equation}
    G(t,v) = \frac{1}{2\pi}\theta(t) \int_{-\infty}^\infty e^{S(k,v)t}dk \,.
\end{equation}
We notice that the integrand is analytic in $\mathcal R$, so we can equivalently compute this integral along a different contour, as long as the deformation remains within $\mathcal R$. Our goal is to find the contour $\gamma$ along which the real part of $S(k,v)$ remains as small/negative as possible:
\begin{equation}
    M(v) = \inf_\gamma\sup_{k\in\gamma} \Re S(v,k) \,.
\end{equation}
Let us say $\Re S(v,k)$ reaches this value $M(v)$ at a set number of critical points $\{k_n\}$. Due to the analyticity of $S$ in ths region, these critical points are necessarily either the end points of the contour, or saddle points. We will for now focus on the latter. The contribution of all other parts of the contour will be exponentially suppressed compared to that of the regions near these saddle points. If we further assume the saddles to be simple (i.e. $\partial_k^2S(k,v)|_{k\in \{k_n\}}\ne0$), we find the late time approximation:
\begin{multline}
    G(t,v) 
    \sim \frac{1}{\sqrt{2\pi t}}e^{M(v)t}\sum_n [-\partial_k^2S(k_n,v)]^{-1/2} \\ \equiv \frac{C(v)}{\sqrt{t}} e^{-\Gamma(v)t} \,.
\end{multline}
At $v=c_s=\omega'(0)$, the point $k=0$ will be a saddle point of $S(k,c_s)$, and since $S(0,v)=0$, we also find $\Gamma(c_s)=0$. For any other value of $v$, $\partial_kS(0,v)\ne0$, and so we can (formally) deform the contour towards one where $S(k,v) < 0$ everywhere on the contour, leading to an overall positive decay rate $\Gamma(v\ne c_s)>0$. Until know, we have neglected the end points of the contour. In the case where $\lim_{k\to\pm\infty}\omega(k) = 0$, these points will lead to a non-decaying correlation function. In order to avoid this, we need the stronger requirement that $\omega(k)$ maintains a finite imaginary gap away from $k=0$:
\begin{equation} \label{eq:gapped hydro}
    (\forall \delta>0)(\exists \epsilon>0)(\forall k\in\mathbb{R}\backslash[-\delta,\delta])(\Im\omega(k) \le -\epsilon) \,.
\end{equation}
This previous intuition can be formulated in the form of the following theorem.
\begin{restatable}[Decay rate]{thm}{Decay} \label{th:decay}
    Let $\omega(k) : \mathbb C \to \mathbb C$ and $g(k) : \mathbb C \to \mathbb C$ be analytic for $|k|<R$. $|g(k)|$ is furthermore bounded by a polynomial on $\mathbb R$. Let $c_s=\omega'(0)$ and
    \begin{equation}
        G(t,v) = \frac{1}{2\pi}\theta(t)\int_{-\infty}^\infty g(k) e^{ikvt-i\omega(k)t}dk \,.
    \end{equation}
    If $\omega(k)$ satisfies the condition \eqref{eq:gapped hydro}, then there exists a tempered distribution $F(t,x) \in \mathcal S'(\mathbb R^2)$ and smooth function $\Gamma(v)$ such that:
    \begin{itemize}
        \item $G(t,v) = F(t,vt)e^{-\Gamma(v)t}$ for all $t>0$ and $v\in \mathbb R$,
        \item $\Gamma(v) > 0$ for all $v\ne c_s$,
        \item $\Gamma(v)$ increases monotonically in $|v-c_s|$.
    \end{itemize}
\end{restatable}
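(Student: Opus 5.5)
The plan is to take $\Gamma(v)$ to be a Legendre-type transform obtained by shifting the integration contour to $\mathbb R + i\kappa$, and then show that the remainder $F$ is tempered. I will work near $k=0$, where $\omega$ is analytic. The gap condition \eqref{eq:gapped hydro} will handle the region $|k|\ge\delta$.

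\textbf{Step 1: split the integral.} Fix a small $\delta<R$ and a smooth cutoff $\chi(k)$ supported in $|k|<\delta$ with $\chi=1$ near $0$. Write $G=G_{\rm in}+G_{\rm out}$.

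In $G_{\rm out}$ the integrand is bounded by $|g(k)|e^{-\epsilon t}$. Since $g$ is polynomially bounded, $G_{\rm out}$ is (as a distribution in $x$) a tempered object multiplied by $e^{-\epsilon t}$. I will arrange $\Gamma(v)\le \epsilon$ on the relevant range, or else simply cap $\Gamma$ below $\epsilon$. Then $G_{\rm out}e^{\Gamma(v)t}$ remains tempered.

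\textbf{Step 2: shift the contour in $G_{\rm in}$.} Since $\chi g e^{S t}$ is analytic only where $\chi=1$, I will instead use an analytic partition: shift the contour on $[-\delta,\delta]$ to the path $[-\delta,-\delta+i\kappa]\cup[-\delta+i\kappa,\delta+i\kappa]\cup[\delta+i\kappa,\delta]$, with $|\kappa|<\sqrt{R^2-\delta^2}$.

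On the horizontal segment, $\Re S(k+i\kappa,v)=-\kappa v+\Im\omega(k+i\kappa)$. Define
\begin{equation}
\Gamma(v)=\sup_{|\kappa|\le\kappa_0}\Big(\kappa v-\sup_{|k|\le\delta}\Im\omega(k+i\kappa)\Big)\,.
\end{equation}
For the smooth version, replace the inner supremum by its value at the stationary point. To leading order this is $\Im\omega(i\kappa)\approx \kappa c_s+O(\kappa^2)$ with the sign fixed by dissipation, i.e. $\Im\omega(k)\approx -Dk^2$ and $\Im\omega(i\kappa)\approx c_s\kappa + D\kappa^2$. This makes $\Gamma(v)=\sup_\kappa[\kappa(v-c_s)-D\kappa^2-\dots]$, a Legendre transform of a function that is convex near $0$. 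It follows that $\Gamma\ge0$, $\Gamma(c_s)=0$, $\Gamma>0$ for $v\ne c_s$, and $\Gamma$ is monotone in $|v-c_s|$, since a Legendre transform of a convex function vanishing with zero slope at the origin has exactly these properties. Smoothness follows from strict convexity via the implicit function theorem. Outside the range where the optimiser $\kappa^*(v)$ stays below $\kappa_0$, I will continue $\Gamma$ linearly or saturate it smoothly and monotonically below $\epsilon$; monotonicity is preserved.

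The vertical segments contribute terms of size $e^{-\epsilon' t}$ because $\Im\omega(\pm\delta+is)<0$ there, with a slight widening of the gap by continuity. These terms are treated like $G_{\rm out}$.

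\textbf{Step 3: bound the remainder.} On the shifted horizontal segment, $|e^{St}|\le e^{-\Gamma(v)t}$. So $F:=Ge^{\Gamma(v)t}$ evaluated at $x=vt$ is bounded by $\int|g|\,dk$ on a compact set plus the exponentially small pieces. This gives $F$ polynomially bounded in $(t,x)$, hence tempered.

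\textbf{Main obstacle.} The hard part is Step 2. The optimal shift $\kappa^*(v)$ depends on $v$, and $|S|$ must be controlled uniformly on the horizontal segment, not just at $k=0$; here convexity of $k\mapsto-\Im\omega(k+i\kappa)$ for small $\delta,\kappa$ should suffice. The other delicate point is that $\Gamma$ must be chosen smooth and monotone while never exceeding the gap $\epsilon$. I would first try defining $\Gamma$ exactly as the Legendre transform for $|v-c_s|$ small and then gluing it to a smooth monotone saturating function.
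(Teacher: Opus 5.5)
Your route is sound in outline but differs from the paper's. The paper deforms the real axis by local bumps near $k=0$ (its deformation lemma, Lemma 1). It defines $\tilde\Gamma(v)$ as minus the supremum of $\Re S$ over the whole deformed contour, and gets monotonicity from $\Re S=-v\Im k+\Im\omega(k)$ with $\Im k\ge0$ on that contour. It then passes to a smooth minorant with all derivatives vanishing at $c_s$ (Lemma 2). The cap by the gap at large $|k|$ is automatic there, because the contour coincides with the real axis away from $0$. You instead use a one-parameter family of rectangles and take the Legendre transform $\Gamma(v)=\sup_{|\kappa|\le\kappa_0}[\kappa v-h(\kappa)]$, with $h(\kappa)=\sup_{|k|\le\delta}\Im\omega(k+i\kappa)$. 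This makes $\Gamma$ canonical and automatically convex in $v$, as a supremum of affine functions. Monotonicity in $|v-c_s|$ then follows once $\Gamma(c_s)=0$, i.e.\ $h(\kappa)\ge c_s\kappa$. Positivity only needs $h(\kappa)=c_s\kappa+o(\kappa)$, which follows from $\Im\omega<0$ on the punctured interval. Your construction is also quantitative. With $a_2=\omega''(0)/2$ and $D:=-\Im a_2>0$, one gets $h(\kappa)\approx c_s\kappa+|a_2|^2\kappa^2/D$; note the inner supremum sits at $k\approx\Re(a_2)\kappa/D$, not at $k=0$ as you wrote. Hence $\Gamma(v)\approx D(v-c_s)^2/(4|a_2|^2)$, which is exactly the steepest-descent rate and reduces to $v^2/(4D)$ for pure diffusion. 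The paper's proof gives no explicit rate, but its smoothing step does not care about degeneracies.

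Two steps need repair. First, smoothness ``via strict convexity and the implicit function theorem'' requires $\Im\omega''(0)\ne0$, which the hypotheses do not provide. Take $\omega(k)=-ik^4/(1+k^4)$, which satisfies everything, with $c_s=0$. Then $\Im\omega(i\kappa)\approx-\kappa^4<0$, while $h(\kappa)\approx8\kappa^4$ is attained off-axis at $k=\pm\sqrt3\,\kappa$, and $\Gamma\propto|v|^{4/3}$, which is not $C^2$. So gluing to the exact Legendre transform near $c_s$ fails. Also, $h(\kappa)\ge c_s\kappa$ cannot be read off from $k=0$; it holds because $\Im(\omega(z)-c_sz)$ is harmonic and vanishes to order $n\ge2$ at $0$, so it is positive of order $|\kappa|^n$ somewhere on $\Im z=\kappa$. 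The fix is a smooth minorant, which is easy in your setting:
\begin{itemize}
\item There is $m\ge1$ with $\Im\omega(k)\le-ck^{2m}$ on $[-\delta,\delta]$.
\item This gives $h(\kappa)-c_s\kappa\le C|\kappa|^{2m/(2m-1)}$, and hence $\Gamma(v)\ge c'(v-c_s)^{2m}$ near $c_s$.
\item Then $\eta\,(1-e^{-c'(v-c_s)^{2m}/\eta})$, with $\eta>0$ small, is smooth, monotone, below $\Gamma$, and enforces the cap.
\end{itemize}
Second, on the vertical segments $\Re S(\pm\delta+is)=-sv+\Im\omega(\pm\delta+is)$, and you dropped the $-sv$ term. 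With $\kappa$ of the sign of $v-c_s$, this term is positive only for $v$ strictly between $0$ and $c_s$, where it is at most $\kappa_0|c_s|$. So $\kappa_0$ must be small compared with the gap at $\pm\delta$. Relatedly, continuing $\Gamma$ linearly is not an option, since $\Gamma$ must stay below the decay rates of the vertical segments and of $|k|>\delta$. Only the saturating choice works.
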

\noindent
The proof is given in the Appendix.

It is worth noting that the monotonic function $\Gamma(v)$ is not necessarily the precise exponential decay of $G(t,v)$, since the latter is generally not smooth in $v$. Furthermore, when $\Im\omega(k)$ is not monotonic in $|k|$, it is possible that $G(t,v)$ admits exponential decay that becomes \emph{slower} when $|v-c_s|$ increases. In this case, $\Gamma(v)$ is a lower bound on the exponential decay, that \emph{does} remain monotonic. The existence of this lower bound is a result of the finite, negative imaginary part of $\omega(k)$ for $k\ne 0$ even in the limit $k\to\pm\infty$. 

When $G(t,v)$ is determined by the sum over a (finite) number of modes $\omega_n(k)$, the generalization is straightforward. Each contribution is described by its own distribution $F_n(t,x)$ and exponential decay $\Gamma_n(v)$, and so the decay of the total correlation function is given by the slowest decaying mode at each value of $v$:
\begin{multline}
    G(t,v) = \sum_n F_n(t,x=vt)e^{-\Gamma_n(v)t} \\
    = \left\{\sum_n F_n(t,x)e^{-[\Gamma_n(v)-\Gamma(v)]t}\right\}e^{-\Gamma(v)t} \\
    \equiv F(t,x) e^{-\Gamma(v)t} \,,
\end{multline}
where $\Gamma(v)$ is a smooth function with the same properties as $\Gamma_n(v)$, that lower bounds $\min_n \{\Gamma_n(v)\}$ at each $v\in\mathbb R$. As the sum over products of tempered distributions with smooth, bounded functions, $F(t,x)$ is also a tempered distribution.
\\
\\
From Theorem \ref{th:decay}, we learn that at late times the hydrodynamic contribution is concentrated in a small region around the radial direction $v=c_s$, since the correlation function decays exponentially in time everywhere else. Furthermore, we observe that the late time dynamics of the correlation function are not necessarily described by hydrodynamics at all points in the lightcone, even if the UV sector is transient. This happens when $\Gamma(v)$ exceeds the decay rate of the UV sector, which can be derived in an analogous manner.

\section{Causal UV completions}
\vspace{-5pt}
The result from the previous section is what allows us to consider causal completions of hydrodynamics where all the modes are added in the UV regime, without needing to alter the hydrodynamic dispersion relation. Instead, we simply remove the mode outside of some region $\mathcal K $ in $k$-space, that contains the origin $k=0$. \\
\\
Intuitively, this works as follows. Consider a dispersion relation $\omega(k)$ with sound speed $|c_s| < 1$, that is dissipative at small values of $k$. If the mode would become unstable at some real value of $k$, we can simply cut the regime short and set $\omega(k) = -i\infty$ outside of it. We now apply Theorem \ref{th:decay} and find that the contribution of the hydro mode decays exponentially in time outside of the light cone, since $|v|\ge1>|c_s|$. We then remove these exponentially decaying tails outside of the lightcone and redistribute their contribution inside of it (for conservation reasons). Since these corrections decay exponentially in time, they will only add UV modes with finite $\Im \omega < 0$. This leaves the hydro mode entirely intact for sufficiently small values of $|k|$. 

This leads us to:

\begin{restatable}[Causal UV completion]{thm}{compl} \label{th:compl}
    Let $\omega(k):\mathbb C \to \mathbb C$ be analytic for $|k|<R$, and have $\omega(0)=0$, while $\Im \omega(k) < 0$ in a real, punctured neighborhood of $k=0$. If $|c_s|=|\omega'(0)| < 1$, there exists an open interval $0\in \mathcal K \subseteq \mathbb R$ and a correlation function $G(\omega,k)$, for which
    \begin{itemize}
        \item its inverse Fourier transform $G(t,x)$ vanishes when $|x|>t$.
        \item For all $k\in\mathcal K$, the slowest decaying mode (largest imaginary part) is a single pole at $\omega = \omega(k)$.
    \end{itemize}
\end{restatable}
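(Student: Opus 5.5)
We follow the intuitive construction sketched just above the statement: truncate the hydrodynamic mode in $k$-space, use Theorem~\ref{th:decay} to show that what leaks outside the lightcone decays exponentially, and then subtract that leakage.

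\emph{Truncated mode.} Since $\Im\omega(k)<0$ on a real punctured neighbourhood of the origin and $\omega$ is analytic for $|k|<R$, we pick $\delta<R$ such that $\Im\omega(k)<0$ for $0<|k|\le\delta$, and set $\mathcal K=(-\delta/2,\delta/2)$. We take a smooth cutoff $g(k)$ with $g=1$ on $\mathcal K$ and $\operatorname{supp}g\subset(-\delta,\delta)$. Let $\tilde\omega(k)$ agree with $\omega(k)$ on $[-\delta,\delta]$ and have $\Im\tilde\omega\le-\epsilon_0$ outside. The choice of $\tilde\omega$ there is irrelevant, because it is multiplied by $g=0$. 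The function $\tilde\omega$ then satisfies the gap condition \eqref{eq:gapped hydro}, and we define
\begin{equation}
G_h(t,x)=\frac{1}{2\pi}\theta(t)\int g(k)e^{ikx-i\omega(k)t}dk .
\end{equation}
\emph{Caveat:} a compactly supported smooth $g$ is not analytic, so Theorem~\ref{th:decay} does not apply verbatim. Two remedies are available. The first is to replace $g$ by an analytic cutoff that is nearly $1$ on $\mathcal K$, for example a Gaussian in $k$ combined with an $\omega$-dependent cutoff. The second, which we adopt, is to rerun the saddle-point argument of the Appendix directly on $(-\delta,\delta)$, deforming the contour only within $|k|<R$, using a smooth $g$ that is real-analytic on $\mathcal K$. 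In either case we obtain $G_h(t,vt)=F(t,vt)e^{-\Gamma(v)t}$ with $\Gamma(v)>0$ for $v\ne c_s$ and $\Gamma$ increasing in $|v-c_s|$. Since $|c_s|<1$, we get $\Gamma(v)\ge\Gamma_*:=\min(\Gamma(1),\Gamma(-1))>0$ for all $|v|\ge1$.

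\emph{Subtracting the acausal tail.} Let $\chi$ be the indicator of $|x|>t$, and set $G=G_h-\chi G_h+H$. Here $H$ is supported inside the lightcone and restores conservation; for instance we take $H(t,x)=h(t,x)\int\chi G_h\,dx$ for a fixed smooth profile $h$ supported in $|x|\le t$ with unit mass. By construction $G(t,x)$ vanishes for $|x|>t$, which gives the first bullet. The correction $C=-\chi G_h+H$ obeys a bound of the form $|C(t,x)|\lesssim P(t,x)e^{-\Gamma_* t}$ with $P$ polynomial, because on $|x|>t$ the exponent is controlled by $\Gamma(v)\ge\Gamma_*$. Temperedness of $F$ has to be upgraded to a pointwise or weighted bound here; we expect this to follow from the Appendix proof, possibly after smoothing $\chi$ near $|x|=t$. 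Consequently the Fourier transform $C(\omega,k)=\int C\,e^{i\omega t-ikx}$ is analytic in $\omega$ for $\Im\omega>-\Gamma_*$, for every real $k$.

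\emph{Mode structure.} For $k\in\mathcal K$ we have $G_h(\omega,k)=i/(\omega-\omega(k))$, a single pole. Adding $C(\omega,k)$ introduces singularities only with $\Im\omega\le-\Gamma_*$. Shrinking $\mathcal K$ further so that $\Im\omega(k)>-\Gamma_*$ on it, which is possible since $\omega(0)=0$, the pole at $\omega(k)$ is the slowest decaying singularity. This gives the second bullet. One must also check that $C$ does not cancel the pole; it cannot, because $C$ is analytic there. The step we expect to be the main obstacle is the uniform exponential bound on the tail $\chi G_h$. This includes the non-analyticity of the cutoff, handled by the contour-deformation argument above, and the strict bound $\Gamma(\pm1)>0$ being uniform up to the lightcone. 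We would first try to derive it from the saddle-point estimate in the proof of Theorem~\ref{th:decay}, with explicit error terms.
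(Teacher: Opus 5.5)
Your proposal is correct and follows essentially the same route as the paper. Both truncate the mode in $k$, use Theorem~\ref{th:decay} with monotonicity of $\Gamma$ and $|c_s|<1$ to get $\Gamma(v)\ge\Gamma_*>0$ on $|v|\ge1$, cut with $\theta(t^2-x^2)$, note the subtracted tail's Laplace transform is analytic for $\Im\omega>-\Gamma_*$, and shrink $\mathcal K$ so that $\Im\omega(k)>-\Gamma_*$. The differences are minor: the paper uses the sharp cutoff $\theta(\tilde R^2-k^2)$ instead of a smooth $g$, adds the conservation term only in a remark after the proof, and simply asserts the tail bound you flag as the main obstacle — which does follow by using the single deformed contour $\tilde\gamma_{\pm1}$ for all $\pm v\ge1$, since there $\Re S(k;v)\le\Re S(k;\pm1)$.
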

\begin{proof}
    Firstly, we choose a value of $\tilde R < R$, such that $\Im \omega(k) < 0$ for all $k \in [-\tilde R,R]\backslash{0}$, and define
    \begin{equation}
        G_0(\omega,k) := \theta(\tilde R^2-k^2) \frac{i}{\omega-\omega(k)} \,.
    \end{equation}
    This distribution is compact in $k$, and bounded for large $\omega$, so it is tempered. We can therefore compute its inverse Fourier transform
    \begin{equation}
        G_0(t,x) = \frac{1}{2\pi}\theta(t)\int_{-\tilde R}^{\tilde R} e^{ik x-i\omega(k) t}dk \,.
    \end{equation}
    This distribution is not only tempered, but also analytic for all $x\in\mathbb R$ and $t\ge0$ due to the compact support in $k$. This allows us to define the 'causal completion'
    \begin{equation}
        G(t,x) := G_0(t,x) \theta(t^2-x^2) \,,
    \end{equation}
    which, as the product of a bounded function with a smooth\footnote{For simplicity, we ignore the $\theta(t)$ factor that turns the function discontinuous, since it is already incorporated in $\theta(t^2-x^2)$.}, tempered function is a tempered distribution. The same is true for
    \begin{equation}
        \Delta(t,x) := G_0(t,x) - G(t,x) \,.
    \end{equation}
    From Theorem \ref{th:decay}, we find that $G_0(t,x)$ will admit an exponential decay with rate $\Gamma(v)$, which is monotonically increasing with $|v-c_s|$ and nonzero for $v\ne c_s$. Since $|c_s|<1$, we find that $\Gamma(v) \ge \min\{\Gamma(\pm1)\} =: \Gamma_m > 0$ for all $|v| \ge 1$. Therefore, we can write
    \begin{equation}
        \Delta(t,x) = \Delta_\Gamma(t,x) e^{-\Gamma_m t} \,,
    \end{equation}
    where $\Delta_\Gamma(t,x)$ is tempered. Its Fourier transform is
    \begin{multline}
        \Delta(\omega,k) = \int_0^\infty \int_{-\infty}^\infty \Delta(t,x) e^{i\omega t-i k x}dt dx \\
        = \int_0^\infty\tilde\Delta_\Gamma(t,k) e^{(i\omega-\Gamma_m)t}dt \,.
    \end{multline}
    As the Laplace transform of a tempered distribution, we find $\Delta(\omega,k)$ will be analytic when $\Im\omega > -\Gamma_m$. Since $\omega(0)=0$, there will be an interval $\mathcal K \subseteq (-\tilde R,\tilde R)$ where this inequality is satisfied by the mode $\omega(k)$. To conclude, within this interval $\mathcal K$,
    \begin{equation} \label{eq:Gcomp def}
        G(\omega,k) = G_0(\omega,k) - \Delta(\omega,k)
    \end{equation}
    will have no singularities with $\Im\omega > -\Gamma_m$, other than the single pole at $\omega = \omega(k)$ arising from $G_0(\omega,k)$.
\end{proof}
The power of this Theorem lies in the implication that \emph{any} hydrodynamic dispersion relation can exist within a causal UV completion, given the sound speed is subluminal and that the hydrodynamic gradient expansion has finite radius of convergence. The first requirement is fully physical and presumably impossible to get rid of, since a dissipative mode with sound speed larger or equal to one will violate the causality condition \eqref{eq:causal cond} in any complex neighborhood of $k=0$. The second requirement is mathematical in nature, and likely not strictly necessary. Having it as a necessity of the proof in its current form unfortunately does exclude incorporating effects from stochastic fluctuations \cite{Kovtun:2012rj, Chen-Lin:2018kfl}. \\
\\
A first remark on the proof is the restriction of $k$ to a finite interval of the real line. This is important when a mode would otherwise become unstable, as is for example the case for the particle diffusion mode in RTA kinetic theory \cite{Romatschke:2015gic}:
\begin{equation}
    \omega(k) = -\frac{i}{\tau_R}\left(1-\frac{k\tau_R}{\tan(k\tau_R)}\right) \,,
\end{equation}
where removing the mode corresponds to absorption by a branch cut \cite{Brants:2024wrx}.

In many other cases however, using a finite interval is not actually required, but using the full real line has the consequence that $G_0(t,x)$ is now no longer analytic, and multiplying with a discontinuous function $\theta(t^2-x^2)$ may pose issues. It is however possible to show that, when $\lim_{k\to\pm\infty}\Im\omega(k) = -\infty$ faster then $\Im\omega(k)\sim -|k|$, and $\omega(k)$ is analytic in a strip around the real line (as such, any stable polynomial suffices), $G(t,x)$ can still be defined rigorously. This approach has the advantage that we can now extent the real interval $\mathcal K$ to a complex one within the analytic strip. A proof of this statement is particularly convoluted and goes beyond the scope of the paper, since it arguably does not hold more physical relevance. \\
\\
Finally, we want to point out that we removed a part of the real-space correlation function, hence possibly violating the conservation that originally lead to the emergence of hydrodynamics. It is however trivial to add this contribution back onto the edges of the lightcone, resulting in UV modes on the lines $\omega \in  \pm k -i [\Gamma_m,\infty)$; a perfect example is given by \cite{Gavassino:2026zsz}. Alternatively, the contribution can be spread out over the full light cone, which happens in many RTA examples \cite{Bajec:2024jez}. The details of this redistribution are entirely encoded in the UV modes of the system, and as such are not accessible from the hydrodynamic dispersion relation alone. 

\vspace{-1pt}
\section{Non-hydrodynamic modes}
\vspace{-5pt}
In this section, we discuss the modes that emerge from a causal completion, and how we can bootstrap the non-hydrodynamic sector based on the hydrodynamic dispersion relation. This question can have both simple and extremely complicated answers. Let us assume we have a way of computing the \emph{exact} decay rate $\Gamma_\text{H}(v)$ of the hydrodynamic sector, not just the lower bound used in previous sections.

On one hand, we could argue that it is possible to add just about any non-hydrodynamic mode to the mode structure as long as they do not violate causality. In this case, all we can say about the non-hydrodynamic sector is that it must be sufficiently long-lived, i.e. we require modes with
\begin{equation}
    -\frac{1}{\tau_R}\equiv\sup_{k\in\mathbb R} \left\{\Im \omega_\text{NH}(k)\right\} \ge -\lim_{v\to 1^+}\Gamma_\text{H}(v)
\end{equation}
in order to be able to cancel the contribution of the hydrodynamic mode just outside of the lightcone. This sets a direct lower bound on the thermalization timescale of the non-hydrodynamic sector. Of course, this is only a necessary condition and the modes emerging from $\Delta(\omega,k)$ in \eqref{eq:Gcomp def} will be much more explicit. 

But the point to be made here, is that without any other assumptions, the contribution from these modes, together with the hydrodynamic mode, can in theory be completely swamped by arbitrarily long-lived modes that may be added on top of the ones required for causality. 

In this scenario however, the EFT approach to hydrodynamics does not apply in the first place. The success of hydrodynamics in describing late time dynamics can thus be used as a motivation to instead focus on the minimal UV corrections needed to restore causality. In particular, we want a large part of the lightcone to behave ``hydrodynamically'' at late times, meaning $G(t,v) \approx G_0(t,v)$ in the limit $t\to \infty$ and $|v|<1$. \\
\\
Let us return to the example of pure diffusion, since it has an analytic solution. In this case, we had $\Gamma_\text{H}(v) = v^2/(4D)$, meaning the maximal gap of the non-hydrodynamic sector is $\Im\omega = -1/(4D)$ at real $k$. The thermalization timescale towards the edge of the lightcone is therefore $\tau_R \ge 4D$ for pure diffusion, a more explicit version of the bound found in \cite{Hartman:2017hhp}. 

For this dispersion relation we can explicitly compute $\Delta(\omega,k)$, and find it indeed has two branch cuts at
\begin{equation}
    \omega \in \pm k-i\left[\frac{1}{4D},\infty\right) \,.
\end{equation}
In order to keep the signal inside of the lightcone, we choose to add the removed contributions back onto the edge in the form of two peaks $\frac{1}{2}\Delta(t)\delta(x^2-t^2)$. Since $\Delta(t)$ also decays at a rate $\ge \Gamma_\text{H}(1)$, this will lead to the same type of branch cuts. This choice of re-adding the modes through ballistic transport is physically motivated for massless particles in one spatial dimension. In higher dimensions or with massive transport we would also expect a continuum of modes with $-k < \Re \omega < k$ \cite{Brants:2024wrx,Gavassino:2026zsz}. Altogether, we obtain the following causal UV completion of diffusion:
\begin{widetext}
\begin{equation}
    G(\omega,k) = \frac{i}{\omega+i Dk^2} \left(\frac{1-2iDk}{2\sqrt{1-4iD(k+\omega)}} \right) - \frac{i}{\omega^2-k^2} \left(\frac{\omega-k}{2\sqrt{1-4iD(k+\omega)}}-\frac{\omega}{2}\right)+(k\leftrightarrow -k) \,.
\end{equation}
\end{widetext}
The first term contains both the removal of the tails, as well as the hydrodynamic mode which exists for all $|\Im k| < 1/(2D)$, see also \cite{Gavassino:2026tvy}. The second term is added to ensure conservation. Note that this mode structure is fully identical to the one found in \cite{Gavassino:2026zsz} for the one-dimensional case. This should not be surprising at all, because the entire real-time correlation function is uniquely determined by the dispersion relation, causality and conservation! For all $|x|<t$, the correlation function is equal to the hydrodynamic $G_0(t,x)$, while the peaks at $x=\pm t$ are fixed by conservation (and $x\leftrightarrow -x$ reflection symmetry). 

This is a large contrast compared to the previous statement and EFT principle that the UV regime exists independent of the IR regime. In general, we would expect a result that falls in between the two extremes: part of the UV modes are fixed by the causal completion, while the other part is distributed over the complex plane, adding non-trivial transient behavior. This can be done using an EFT construction with additional guiding principles, based on the properties of the system \cite{Liu:2018kfw,An:2025ils}.

\vspace{-5pt}
\section{Conclusions and outlook}
\vspace{-5pt}
So far, we have laid the basis of how to bootstrap the UV sector from of a hydrodynamic EFT using causality as an overarching principle, on the level of retarded correlation functions. The fact this additional sector is necessary in the first place, is proven by Theorem \ref{th:acausal hydro}, which shows that dissipative hydrodynamics as a stand-alone theory cannot be causal. 
We furthermore demonstrated in Theorem \ref{th:compl} that such a causal UV completion will always exist for hydrodynamic dispersion relations with subluminal sound speed and finite radius of convergence of the hydrodynamic expansion. The combination of these two theorems provides a powerful justification for studying this field.

We also proved that dissipative hydrodynamics will lead to exponential decay in the majority of the lightcone, and provided a practical method of computing the decay rate. This is an important feature for predicting properties of the non-hydrodynamic sector, such as a minimal thermalization timescale similar to \cite{Delacretaz:2023pxm, Qi:2026vht}. This is demonstrated explicitly for diffusion, but can be easily extended to more complicated dispersion relations. \\
\\
Our results have important implications when considering causal restrictions on hydrodynamics, as discussed in terms of the \emph{hydrohedron} \cite{Heller:2023jtd}. Since practically any dispersion relation can be part of a causal UV completion, it is clear that one cannot directly restrict hydrodynamics. Instead we restrict the regime of applicability of hydrodynamics, given by $\mathcal K$ in Theorem \ref{th:compl}, which is the more general interpretation of $|k|<R$ in aforementioned paper. This implication agrees with the results and discussion in \cite{Gavassino:2026fil}, but an important difference is that we additionally require the hydrodynamic mode to be decaying slowest, in order for hydrodynamics to be applicable at late times. \\
\\
Finally, we note that the exponential decay rate $\Gamma(v)$ also provides stronger causal bounds of the form
\begin{equation}
    \Im \omega \le \sup_{|v|<1} [v \Im k -\Gamma(v)] \,,
\end{equation}
allowing for a possible improvement of the hydrohedron bounds, which is a promising direction for future work. \\

\noindent
\textbf{\emph{Acknowledgements} -}
First and foremost, I want to thank Michal P. Heller for the supervision of this work and his invaluable feedback and discussions in the writing of this paper. I furthermore want to thank Lorenzo Gavassino and Alexandre Serantes for the useful discussions leading up to this work, as well as Matisse De Lescluze and Xin An for their feedback.

This project has received funding from the European Research Council (ERC) under the European Union’s Horizon 2020 research and innovation programme (grant number: 101089093 / project acronym: High-TheQ). Views and opinions expressed are however those of the authors only and do not necessarily reflect those of the European Union or the European Research Council. Neither the European Union nor the granting authority can be held responsible for them.

\bibliographystyle{bibstyl}
\bibliography{main}

\section{Hydrodynamic decay - proofs} \label{sec:decay proofs}

We start off with a specialized version of the deformation lemma \cite{chabrowski1996introduction} for analytic functions, where we restrict the contour and its deformation such that their projection on the real axis $\gamma \mapsto \Re(\gamma)$ is bijective. Although this restriction is often not necessary, it will significantly simplify future theorems.

\begin{lemma}[Deformation Lemma] \label{lm:deformation}
    Let $y(x)\in C^1(\mathbb R,\mathbb R)$ define a contour $\gamma = \{x+iy(x)| x\in \mathbb R\} \subset \mathbb C$, and let $f:\mathbb C\to\mathbb C$ be holomorphic in $\mathcal R\subseteq \mathbb C$. If there exists a point $z_0 \in \gamma \cap \mathcal R$ for which $\Im f'(z_0) > 0$, then there exists a contour deformation $\tilde y(x)\in C^1(\mathbb R,\mathbb R)$, $\tilde\gamma = \{x+i\tilde y(x)| x\in \mathbb R\}$, and a neighborhood $V \subseteq \mathbb R$ of $\Re z_0$ for which
    \begin{enumerate}
        \item $y(x) \le \tilde y(x) \quad \forall x \in \mathbb R$
        \item $\{z\in\mathbb C: y(\Re(z)) \le \Im z \le \tilde y(\Re(z)) \} \backslash (\gamma\cap\tilde\gamma) \subset \mathcal R$
        \item $\sup_{z\in\tilde\gamma} \Re f(z) \le \sup_{z\in\gamma} \Re f(z)$
        \item $\sup_{x\in V} \Re f(x+i\tilde y(x)) < \sup_{x\in V} \Re f(x+iy(x))$ .
    \end{enumerate}
\end{lemma}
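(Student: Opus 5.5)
The idea is to push the contour upward by a small local bump around $x_0 := \Re z_0$, where pushing up decreases $\Re f$.

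\textbf{Directional derivative.} By the Cauchy--Riemann equations, for $f$ holomorphic near $z_0$ and $z = z_0 + i\eta$ with small real $\eta$,
\begin{equation}
    \partial_\eta \Re f(z_0+i\eta)\big|_{\eta=0} = \Re\big(i f'(z_0)\big) = -\Im f'(z_0) < 0 .
\end{equation}
Hence moving vertically upward from $z_0$ strictly decreases $\Re f$.

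\textbf{Uniform decrease on a box.} Since $\mathcal R$ is open (being the domain of holomorphy), we may fix $\delta,h>0$ so that the closed set
\begin{equation}
    B=\{x+i(y(x)+s): |x-x_0|\le\delta,\ 0\le s\le h\}
\end{equation}
lies in $\mathcal R$, and so that $\Im f'(z) \ge c > 0$ on $B$. Continuity of $f'$ and of $y$ makes this possible, because $B$ shrinks to $z_0$ as $\delta,h\to 0$. Then for every $|x-x_0|\le\delta$ and $0\le s\le h$,
\begin{equation}
    \Re f\big(x+i(y(x)+s)\big) - \Re f\big(x+iy(x)\big) = -\int_0^s \Im f'\big(x+i(y(x)+\sigma)\big)\,d\sigma \le -cs \le 0 .
\end{equation}
The integrand is exactly the vertical derivative $\partial_\sigma \Re f = -\Im f'$ along the segment.

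\textbf{Construction of $\tilde y$.} Choose a $C^1$ bump $\phi\ge0$ supported in $[x_0-\delta,x_0+\delta]$, with $\phi\le h$ and $\phi \equiv h/2$ on $V=(x_0-\delta/2,x_0+\delta/2)$. Set $\tilde y = y+\phi$, which is $C^1$. We check the four properties.
\begin{itemize}
    \item[1.] $\tilde y \ge y$ holds because $\phi\ge 0$.
    \item[2.] The region between the two contours, minus their common points, is contained in $B\subset\mathcal R$. Outside the support of $\phi$ the two contours coincide, so nothing else needs to lie in $\mathcal R$.
    \item[3.] Pointwise, $\Re f(x+i\tilde y(x)) \le \Re f(x+iy(x))$ for all $x$: outside the support of $\phi$ this is an equality, and inside it follows from the bound on $B$. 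Taking suprema gives $\sup_{\tilde\gamma}\Re f\le\sup_\gamma \Re f$.
    \item[4.] On $V$ the pointwise gap is at least $ch/2$. Since $\bar V$ is compact and the functions are continuous, the suprema over $V$ are approached on $\bar V$, so $\sup_V \Re f(\cdot+i\tilde y) \le \sup_V \Re f(\cdot+iy) - ch/2$, a strict inequality.
\end{itemize}

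\textbf{Main obstacle.} The delicate step is the uniform-box step. It requires continuity of $f'$ up to $z_0$ and that $z_0$ be an interior point of $\mathcal R$. If $\mathcal R$ were not open, one would first have to shrink to an open neighborhood of $z_0$ inside the holomorphy domain; with that assumption, everything else is routine.
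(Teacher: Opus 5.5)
Your proof is correct and follows essentially the same route as the paper: both use $\partial_\eta \Re f = -\Im f'$ on a neighbourhood of $z_0$ where $\Im f' > 0$, and push the contour upward by a local bump supported there. The difference is in the construction: the paper builds $\tilde y$ from the upper boundary $\partial U^+$ of a compact convex neighbourhood $U$ with an exponential cutoff, whereas your graph-adapted box $B$ and explicit $C^1$ bump $\phi$ make the $C^1$ regularity of $\tilde y$ (which $\partial U^+$ alone does not guarantee) and the uniform gap $ch/2$ on $V$ explicit.
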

\begin{proof}
    Firstly, because $f$ is holomorphic in $z_0$ and $\Im f'(z_0) > 0$, there exists a neighborhood $U_0 \subseteq \mathcal R$ of $z_0$ in which $\Im f'(z) > 0$. If necessary, we can further shrink $U_0$ to a compact, convex neighborhood $U \subseteq U_0$ of $z_0$ such that
    \begin{equation}
        z\in U \Rightarrow \Re z+i[y(\Re(z)),\Im(z)] \subset U \cap \mathcal R  \,.
    \end{equation}
    What this condition ensures, is that within $U_1$ we can deform the contour in the positive imaginary direction, without leaving $\mathcal R$, and that at at each point in this region $\Re f(z)$ strictly decreases with increasing $\Im z$. The latter is a result of of $f(z)$ being holomorphic in $U$:
    \begin{equation}
        \lim_{h\to 0} \frac{\Re f(z+ih)-\Re f(z)}{h} = -\Im f'(z) < 0 \,.
    \end{equation}
    Let $x_1 := \sup\{x\in \Re(\gamma \cap\partial U): x < \Re z_0\}$ and $x_2 := \inf\{x\in \Re(\gamma \cap\partial U): x > \Re z_0\}$ with $\partial U$ the boundary of $U$ in complex space. The upper boundary $\partial U^+(x) \equiv \sup \{y \in \mathbb R | x+iy \in U\}$ then defines a function on $[x_1,x_2]$, with $\partial U^+(x) > y(x)$ for all $x\in (x_1,x_2)$. $\partial U^+(x_1)$ and $\partial U^+(x_2)$ do not necessarily match $y(x_1)$ and $y(x_2)$, but we can straightforwardly introduce a continuous function $\tilde y(x)$ on $(x_1,x_2)$ such that $y(x) < \tilde y(x) \le \partial U^+(x)$, while $\lim_{x\to x_1}\tilde y(x) = y(x_1)$ and $\lim_{x\to x_2}\tilde y(x) = y(x_2)$. An example is
    \begin{equation}
        \tilde y(x) = y(x) + \left[\partial U^+(x)-y(x)\right]\exp\left[\frac{1}{(x-x_1)(x-x_2)}\right] .
    \end{equation}
    Outside of $(x_1,x_2)$, we set $\tilde y(x) = y(x)$.
    
    The first condition now follows directly from the construction of $\tilde y$, while the second and third are a result of the properties of $U$. For each $x\in(x_1,x_2)$, $\Re f(x+i\tilde y(x)) < \Re f(x+i y(x))$ has strictly decreased. We can therefore simply choose $V = (\tilde x_1,\tilde x_2)$ with $x_1<\tilde x_1<\tilde x_2<x_2$, which ensures the final condition.
\end{proof}

The proof is analogous if $\Im f'(z_0) < 0$, where now the contour is deformed such that $\tilde y(x) \le y(x)$ is smaller. What this Lemma tells us is that, whenever $\Re f(z)$ reaches a maximum at some point $z_0 \in \gamma$, we can always deform the contour such that $\Re f(z)$ becomes strictly smaller near $z_0$, unless either $f$ is not holomorphic in $z_0$, or $\Im f'(z_0) = 0$. When all maxima along the contour lie inside the analytic region and have nonzero $\Im f'(z)$, we thus find that after repeating this procedure, we obtain a new contour where $\sup_{\tilde\gamma} f$ has strictly decreased. The signs of $\Im f'(z)$ tell us what is locally the direction of the deformation. \\
\\
We now move on to a practical Lemma that allows us to set a smooth lower bound on a possibly ill-behaved function.
\begin{lemma} \label{lm:smooth bound}
    Let $f(x):\mathbb R_{\ge 0} \to \mathbb R_{\ge 0}$ be a function with the following properties:
    \begin{itemize}
        \item $f(x)$ is monotonically increasing in $x$,
        \item $f(0)=0$,
        \item $f(x) > 0 \quad \forall x>0$.
    \end{itemize}
    There exists a smooth function $g(x):\mathbb R_{\ge 0} \to \mathbb R_{\ge 0}$ with the same three properties as $f(x)$, $g(x) \le f(x)$ for all $x\ge 0$, and $g^{(n)}(0)=0$ for all $n\ge 1$.
\end{lemma}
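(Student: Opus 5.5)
The plan is to build $g$ in two stages: first a continuous minorant, then a smoothing step that also flattens $g$ at the origin. Monotonicity of $f$ is what makes minorants easy to find: on each interval $[a,\infty)$ with $a>0$ we have $f\ge f(a)>0$.

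\textbf{Step 1 (staircase minorant).} Use the dyadic points $x_n=2^{-n}$ for $n\in\mathbb Z$ and set $a_n := f(x_{n+1})>0$. This sequence is nondecreasing in $-n$, i.e. it increases as $x$ increases. Define a step function $s(x)=a_n$ for $x\in[x_{n},x_{n-1})$ and $s(0)=0$. Monotonicity of $f$ gives $s(x)\le f(x)$; indeed, for $x\ge x_n$ we have $f(x)\ge f(x_n)\ge f(x_{n+1})=a_n$. The function $s$ is nondecreasing and strictly positive for $x>0$. A piecewise-linear interpolation between the values $a_{n+1}$ at $x_{n+1}$ and $a_n$ at $x_n$ stays below $s$ on $[x_{n+1},x_n]$, so it gives a continuous, monotone, positive minorant $h$. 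It also satisfies $h(x)\to0$ as $x\to0$, since $h(x)\le f(x)$ and $h\ge0$, although $f$ itself need not tend to zero.

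\textbf{Step 2 (smoothing and flatness at $0$).} Choose a fixed smooth, nondecreasing function $\phi:\mathbb R\to[0,1]$ with $\phi=0$ on $(-\infty,0]$ and $\phi=1$ on $[1,\infty)$, for instance built from $e^{-1/x}$. Replace each linear ramp on $[x_{n+1},x_n]$ by
\begin{equation}
    a_{n+1}+(a_n-a_{n+1})\,\phi\!\left(\frac{x-x_{n+1}}{x_n-x_{n+1}}\right).
\end{equation}
The result is $C^\infty$ on $(0,\infty)$, because all derivatives vanish at the junction points. It is nondecreasing, positive, and below $s$, hence below $f$. To force $g^{(n)}(0)=0$, multiply by the flat factor $\psi(x)=\min$-free choice $e^{-1/x}$, i.e. set
\begin{equation}
    g(x)=e^{-1/x}\,\tilde h(x),\qquad g(0)=0,
\end{equation}
where $\tilde h$ is the smoothed staircase. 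Since $e^{-1/x}\le1$ and both factors are nondecreasing and nonnegative, $g$ is still monotone, positive for $x>0$, and satisfies $g\le f$.

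\textbf{Main obstacle.} The hard step is proving smoothness at $x=0$. The derivatives of $\tilde h$ can blow up as $x\to0$: on $[x_{n+1},x_n]$ the $k$-th derivative is of order $(a_n-a_{n+1})2^{nk}$, which is bounded by roughly $C_k x^{-k}$ because $a_n\le f(1)$ is bounded. Each derivative of $e^{-1/x}$ decays faster than any power of $x$. By Leibniz's rule, every derivative of $g$ is then a finite sum of terms of the form (polynomial in $1/x$)$\times e^{-1/x}$, and each such term tends to $0$ as $x\to0$. This gives $g\in C^\infty$ with $g^{(n)}(0)=0$ for all $n\ge1$. The derivative bound on $\tilde h$ must be checked carefully near the dyadic nodes, but it follows from the uniform bounds on $\phi^{(k)}$ together with the scaling factor $(x_n-x_{n+1})^{-k}=2^{(n+1)k}\le (2/x)^k$ on that interval.
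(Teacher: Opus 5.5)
This is essentially the paper's proof. Both build a staircase of rescaled smooth transition functions that reaches $f(x_{n+1})$ at $x_n$, multiply it by $e^{-1/x}$, and get smoothness at $0$ from polynomial-in-$1/x$ bounds on the derivatives of the staircase. Your dyadic nodes give $O(x^{-k})$ bounds, where the paper's choice $x_n=1/n$ gives $O(x^{-2k})$. Your use of $n\in\mathbb Z$ replaces the paper's constant extension beyond $x_1$.

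Two side claims are wrong, though neither affects the conclusion.

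First, with your indexing $s=a_{n+1}$ on $[x_{n+1},x_n)$. The ramp (linear or smoothed) rises from $a_{n+1}$ to $a_n\ge a_{n+1}$, so it lies \emph{above} $s$ there, not below. The inequality you actually need follows directly from monotonicity:
\[
\tilde h(x)\le a_n=f(x_{n+1})\le f(x)\quad\text{for } x\in[x_{n+1},x_n].
\]
This is exactly the paper's key estimate, and it works because of the one-step buffer you built into $a_n=f(x_{n+1})$.

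Second, $h(x)\to0$ as $x\to0^+$ is false in general. In fact $h(x)\to\lim_{x\to0^+}f(x)$, which can be positive, e.g.\ $f=1$ on $(0,\infty)$ gives $h\equiv1$. This is harmless, because $\tilde h$ is bounded near $0$ and the factor $e^{-1/x}$ alone forces $g(x)\to0$.
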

\begin{proof}
    Let $(x_n)_{n\ge 1}$ be a strictly decreasing sequence in $\mathbb R_{>0}$, with $x_n \to 0$. On each interval $[x_{n+1},x_n]$, we construct $\tilde g(x)$ as a smooth, monotonically increasing transition function with $\tilde g(x_n)=f(x_{n+1})$ and $\tilde g^{(m)}(x_n) = 0$ for all $m,n \ge 1$. We then have for all $x\in [x_{n+1},x_n]$:
    \begin{equation}
        \tilde g(x) \le \tilde g(x_n) = f(x_{n+1}) \le f(x) \,.
    \end{equation}
    We thus find a smooth, monotonically increasing function $\tilde g(x)$ on $\bigcup_{n \ge 1} [x_{n+1},x_n] = (0,x_1]$, which satisfies $0 <\tilde g(x) \le f(x)$ on this interval. Beyond $x_1$, we can simply set $\tilde g(x\ge x_1) = \tilde g(x_1)$.

    Finally, we multiply $g(x) := \tilde g(x)e^{-1/x}$. This will ensure the correct limits for $x\to 0^+$, where we simply define $g(0)=0$ and $g^{(m)}(0)=0$ for all $m\ge0$.

    To make sure that indeed $\lim_{x\to 0^+} g^{(m)}(x) = 0$, it is sufficient that $\tilde g^{(m)}(x)$ diverges at most polynomially for small $x$. This is for example the case if we choose $x_n = 1/n$, as we show now. Suppose $\phi(x)$ is a transition function that connects $\phi(0)=0$ and $\phi(1)=1$. Let
    \begin{equation}
        D_m := \sup_{x\in\mathbb [0,1]} |\phi^{(m)}(x)| < \infty \,,
    \end{equation}
    since $\phi(x)$ is smooth. When rescaled and resized to connect $\tilde g(x_{n+1}) = \tilde g_n$ to $\tilde g(x_n) = \tilde g_n + \Delta_n$, we find
    \begin{equation}
        \sup_{x\in\mathbb [x_{n+1},x_n]} |\tilde g^{(m)}(x)| = D_m \Delta_n (x_n-x_{n+1})^{-m} \,.
    \end{equation}
    Now, we choose $x_0 >0$ and look for $n_0 = \sup\{n\in\mathbb N:x_n>x_0\}$, such that $x_{n_0+1} \le x_0$. Then, we have
    \begin{multline}
        \sup_{x\ge x_0} |\tilde g^{(m)}(x)| \le \sup_{n\le n_0}\left\{\sup_{x\in[x_{n+1},x_n]} |\tilde g^{(m)}(x)|\right\} \\
        = D_m\sup_{n\le n_0}\left\{\Delta_n (x_n-x_{n+1})^{-m}\right\} \,.
    \end{multline}
    Firstly, because $\tilde g$ is monotonically increasing and bounded as $0 \le \tilde g(x) \le \tilde g(x_1)$, we also have $0 \le \Delta_n \le \tilde g(x_1)$. If we choose $x_n = 1/n$, we can furthermore compute $(x_n-x_{n+1})^{-1} = n(n+1) = x_n^{-1}(x_n^{-1}+1)$, and since $x_n > x_0 > 0$, we get
    \begin{multline}
        \sup_{x\ge x_0} |\tilde g^{(m)}(x)| \le D_m \left[\frac{1}{x_0}\left(\frac{1}{x_0}+1\right)\right]^m \sup_{n\le n_0} \Delta_n \\
        \le 2^m \tilde g(1) D_m /(x_0)^{2m} \equiv C_m/(x_0)^{2m} \,,
    \end{multline}
    where we used that $\tilde g^{(m)}(x) = 0$ for $x\ge x_1 = 1$.
\end{proof}

It is worth noting that the construction in this proof is not at all optimal. One would however expect it to be possible to find a smooth lower bound on $f(x)$ that lies arbitrarily close to the original function, outside of a small neighborhood around the discontinuities. This mathematical construction is therefore only required to make rigorous statements later on, and in practice we can just use the non-smooth function.

We now move on to proving the actual theorem of this section. We restate:

\Decay*
\begin{proof}
    First, we define $S(k;v) := ikv-i\omega(k)$, which is a one dimensional complex function with $v$ acting as a parameter. Since $S'(0;c_s)=0$, the point $S(0;c_s) = 0$ is a saddle point, so we cannot make any deformations to the contour $\gamma = (-\infty,\infty)$ that would lower the maximum of $\Re S(k;c_s)$ near $k=0$. We therefore set $\tilde\Gamma(c_s)=0$. 
    
    For values $v>c_s$, $\Im S'(0;c_s)>0$ so the deformation Lemma \ref{lm:deformation} tells us there does exist a deformation $\tilde\gamma_v$, parametrized by $\tilde y_v (k_R)$, which lowers $\Re S(k;v)$ in a neighborhood around $k_R\equiv\Re k=0$. Since $\Re S(k;v) < 0$ everywhere except for this neighborhood, we thus found a new contour where $\sup_{k\in\tilde\gamma_v} \Re S(k;v)  = -\tilde\Gamma(v) < 0$. The lemma also tells us that this contour lies entirely in the upper complex plane $\Im k\ge 0$. Since $\Re S(k;v) = -v \Im k +\Im \omega(k)$, increasing $v$ any further can therefore only decrease $\sup_{k\in\tilde\gamma_v} \Re S(k;v)$, not increase. As a result, it is possible to construct a function $\tilde\Gamma(v)=-\sup_{k\in\tilde\gamma_v} \Re S(k;v)>0$ for all $v>c_s$, that will always increase or remain constant with increasing values of $v$. 
    
    Since $\tilde\Gamma$ is not necessarily smooth, we instead use a smooth lower bound $\Gamma(v):\mathbb R_{\ge c_s} \to \mathbb R_{\ge 0}$, which has the required properties of the theorem for $v>c_s$, and furthermore has $\Gamma(c_s)=0$, $\Gamma^{(m)}(c_s)=0$ for all $m \in \mathbb N$. The existence of this function is guaranteed by Lemma \ref{lm:smooth bound}. In an analogous manner, we also find such a smooth lower bound for the decay rate $\Gamma(v)$ when $v< c_s$. Since the left and right limit in $v=c_s$ for the function and all its derivatives are $0$, the function connects both halves of the real line smoothly.

    Since the contour deformation $\tilde\gamma_v$ remains entirely in the holomorphic region of $S(k;v)$, given by $|k|< R$, the integral defining $G(t,v)$ is equivalent along each of these contours. With the parametrization $k = k_R + iy(k_R)$, we can write
    \begin{multline}
        G(t,v) = \frac{1}{2\pi}\theta(t)\int_{-\infty}^\infty e^{ik_R vt-\tilde y_v(k_R)vt-i\omega(k)t} \\
        \times g(k)(1+i\tilde y_v'(k_R))dk_R \,.
    \end{multline}
    In this equation, $\tilde y_v'(k_R)$ should be interpreted in a distributional sense. On a technical note: absolute continuity of $\tilde y_v$ is required to ensure $\tilde y_v'(k_R)$ is Lebesgue integrable and that its Lebesgue integral satisfies the fundamental theorem of calculus, but $\tilde y_v$ is easily lower/upper bounded by a smooth deformation, as also follows from Lemma \ref{lm:smooth bound}. We can now split the integral for $F(t,vt) = G(t,v)e^{\Gamma(v) t}$ into two parts: one for $|k_R|< R$ and one for $|k_R|>R$, so $F(t,x) = F_1(t,x) + F_2(t,x)$. The first term,
    \begin{multline}
        F_1(t,x\equiv vt) = \frac{\theta(t)}{2\pi} \int_{-R}^R e^{(\Gamma(v)-\tilde y_v(k_R)v-i\omega(k))t} \\
        \times g(k)(1+i\tilde y_v'(k_R))e^{ik_R x}dk_R \,,
    \end{multline}
    is the Fourier transform of a compact distribution, and so an entire (tempered) function in $x$. This is also true for all derivatives to $t$, and we conclude $F_1(t,x)$ is an analytic function in $(t,x)$. The argument of the exponential factor has real part $\Gamma(v) + \Re S(k;v) \le 0$ along the contour, which ensures $F_1(t,x)$ is tempered in $\mathbb R^2$.

    For the second term, we have
    \begin{multline}
        F_2(t,x) = \frac{\theta(t)}{2\pi} \int_{-\infty}^\infty e^{(\Gamma(v)-i\omega(k_R))t}\theta(k_R^2-R^2) \\
        \times g(k_R)e^{ik_R x}dk_R \,.
    \end{multline}
    Because $|g(k_R)|$ is bounded by a polynomial, and the exponential factor is $\le 1$ for all $k_R$ and $t\ge0$, this is again the Fourier transform of a tempered distribution. Applying it to a Schwartz function $\varphi(t,x)\in\mathcal S(\mathbb R^2)$, we get:
    \begin{equation}
        |\langle F_2, \varphi \rangle| \le \frac{1}{2\pi} \int_0^\infty \int_{-\infty}^\infty |g(k)|\cdot|\hat\varphi(t,k)| dt dk < \infty \,.
    \end{equation}
    We conclude that, as the sum of two tempered distributions, $F(t,x)$ is also tempered in $\mathbb R^2$.
\end{proof}
It is worth taking a moment to address the, perhaps confusing, notation when constantly switching between $v$ and $x$. For all $t>0$, they are interchangeable, with $x = vt$ and $v = x/t$. It is only when looking at the exact slice $t = 0$ of $\mathbb R^2$, that the use of $v$ becomes problematic. At this specific slice, we need to return to $(t,x)$ coordinates and obtain $G(0,x) = F(0,x)$. $G(t,x)$ and $F(t,x)$ are thus tempered distributions in $\mathbb R^2$, but $G(t,v)$ is not. The use of $v$ and $\Gamma(v)$ is entirely restricted to $t>0$, and is only used to set the exponential decay.\\

\end{document}